\newtheorem{lemma}{Lemma}
\newtheorem{theorem}{Theorem}
\newtheorem{proof}{Proof}
\def\multichoose#1#2{\ensuremath{\left(\kern-.3em\left(\genfrac{}{}{0pt}{}{#1}{#2}\right)\kern-.3em\right)}}
\definecolor{Gray}{gray}{0.3}
\definecolor{codegreen}{rgb}{0,0.6,0}
\definecolor{codegray}{rgb}{0.5,0.5,0.5}
\definecolor{codepurple}{rgb}{0.58,0,0.82}
\definecolor{backcolour}{rgb}{1,1,1}
\lstdefinestyle{mystyle}{
	frame=single,
	backgroundcolor=\color{backcolour},   
	morecomment=[f][\color{Gray}][0]{***},
	keywordstyle=\color{black}\bfseries,
	numberstyle=\tiny\color{codegray},
	stringstyle=\color{codepurple},
	basicstyle=\ttfamily\footnotesize,
	breakatwhitespace=false,         
	breaklines=true,                 
	captionpos=b,                    
	keepspaces=true,                 
	numbers=left,                    
	numbersep=5pt,                  
	showspaces=false,                
	showstringspaces=false,
	showtabs=false,                  
	tabsize=2,
}
\begin{document}

\title{Modelling Arbitrary Computations in the Symbolic Model using an Equational Theory for Bounded Binary Circuits}

\author{
	\IEEEauthorblockN{Michiel Marcus}
	\IEEEauthorblockA{\textit{TNO} \\
		Den Haag,\\The Netherlands \\
		michiel.marcus@tno.nl}
		\and
		\IEEEauthorblockN{Anne Nijsten}
	\IEEEauthorblockA{\textit{TNO} \\
		Eindhoven,\\The Netherlands \\
		anne.nijsten@tno.nl}
	\and
	\IEEEauthorblockN{Frank Westers}
	\IEEEauthorblockA{\textit{TNO} \\
		Den Haag,\\The Netherlands \\
		frank.westers@tno.nl}

}

\lstset{style=mystyle}
\lstset{escapeinside={<@}{@>}}

\maketitle

\begin{abstract}
    In this work, we propose a class of equational theories for bounded binary circuits that have the finite variant property.
These theories could serve as a building block to specify cryptographic primitive implementations and automatically discover attacks as binary circuits in the symbolic model.
We provide proofs of equivalence between this class of equational theories and Boolean logic up to circuit size $3$ and we provide the variant complexities and performance benchmarks using Maude-NPA. This is the first result in this direction and follow-up research is needed to improve the scalability of the approach.

\end{abstract}

\begin{IEEEkeywords}
	formal verification, cryptographic protocol analysis, equational theory
\end{IEEEkeywords}

\section{Introduction}
\label{sec:introduction}

As more and more cryptographic protocols are developed to ensure confidentiality, integrity and authenticity of our data, it is vital that these protocols themselves are secure. Numerous tools and techniques have been developed that allow us to formally verify whether these cryptographic protocols work as intended \cite{blanchet2007cryptoverif, proverif, meier2013tamarin,maude-npa}.

Formal analysis of cryptographic protocols is typically conducted under one of two common models: the computational or the symbolic model. The \emph{computational model} allows for making strong statements about the investigated system. Specifically, they can prove an upper bound on the probability than an attacker can break the system in terms of the security of the building blocks tools. However, tools in this model generally offer very little automation. 

The \emph{symbolic model} makes it possible to reason about security protocols at a higher abstraction level, which allows for a higher degree of automation of the proofs. Common vulnerabilities, such as logical flaws that compromise security properties, can be caught in the symbolic model by automated protocol verification tools \cite{SoK}.
In contrast with the computational model, the symbolic model does not reason about bit strings, but about arbitrary-length variables and models cryptographic primitives in an idealised manner. The symbolic model is therefore not sound, as an attack on the protocol abusing computational aspects of a cryptographic scheme will not be caught in the symbolic model. However, such attacks are often very complex and are also very hard to discover manually using cryptanalysis. In the symbolic model, we typically consider a \textit{Dolev-Yao adversary} \cite{dolev-yoa} that has complete control over the network. The adversary can intercept messages sent between protocol participants, record the messages and insert messages the attacker has created using their own knowledge.

 In this work, we are specifically interested in tools that do \emph{symbolic} verification. In the symbolic model, mathematical properties are modelled using an equational theory. An example is the equational theory with only the rule $dec(enc(m, k), k) = m$, where $dec$ denotes decryption, $enc$ denotes encryption, $m$ is an arbitrary message and $k$ is an arbitrary key. This equational theory models the mathematical properties of a perfectly secure symmetric encryption scheme - the only way to turn a ciphertext $enc(m, k)$ back into the message is through the rule that requires the knowledge of the key. This is obvious to see in our one-line equational system, but if the system becomes more complex, proving security statements become harder. However, if an equational system has the so-called finite variant property (FVP), which we elaborate on in section \ref{definition:fvp}, then the problem can be solved relatively efficiently. In that case there is a procedure called variant narrowing that can be used for unification during the analysis of the system \cite{variant-unification}, a technique that is commonly used by automated protocol verifiers. Hence, it is favourable for an equational system to have the FVP.
 
Examples of equational theories with the FVP are abelian groups, blind signatures, and modular exponentiation. However, several theories, such as associativity (without commutativity), commutativity (without associativity) and the homomorphic property of partially homomorphic encryption do not have the FVP \cite{homenc-fvp}. In our work, we consider equational theories for Boolean logic. The theory of Boolean logic is not FVP, because the equation that models the homomorphic property of homomorphic encryption 
\begin{align*}
\texttt{enc(x + y, k) = enc(x, k) + enc(y, k)}
\end{align*}
is similar to the distributivity rule in Boolean logic 
\begin{align*}
\texttt{(x or y) and z = (x and z) or (y and z)}
\end{align*}

and therefore has similar problems with respect to the FVP. Since there is already some literature available on equational theories with FVP for the homomorphic property of homomorphic encryption, but little on equational theories with the FVP for Boolean logic, we will refer to the former when motivating and explaining our work. Since Boolean logic encompasses more rules than just distributivity, previous literature needs to be extended in order to construct an equational theory with the FVP for Boolean logic, which is the goal of our work.

When an equational theory does not have the FVP, variant narrowing cannot be used for unification. However, it is possible to devise specific unification algorithms for non-FVP theories, which would enable automatic protocol analysis for protocols with these theories. There is a composition theory that states that variant narrowing yields a unification algorithm for the combination of any number of equational theories with the FVP \cite{variant-unification}. If we have a protocol with partially homomorphic encryption, then we have on the one hand decryption and encryption cancellation, which has the FVP, and on the other hand the homomorphic property of the encryption scheme, which does not have the FVP. Even with the dedicated unification algorithm for the homomorphic property of homomorphic encryption as introduced by Anantharaman et al. in \cite{cap_unification}, there is no known theorem that explains how it can be combined with variant narrowing to yield a unification algorithm for the combined equational theory of homomorphic encryption. Similarly, a dedicated unification algorithm for the one-sided distributivity rule of Boolean logic was proposed by Tiden and Arnborg in \cite{boolean_unification}, but it has the same issue. Additionally, Tiden and Arnborg prove that unification for the two-sided distributivity rule of Boolean logic is NP-hard. The best approach is therefore to construct an equational theory that is an approximation of Boolean logic and has the FVP.

\subsection{Contributions}
In this work, we build on main ideas in the work by Escobar et al.\cite{homenc-fvp} and create an equational theory for bounded binary circuits with the FVP. Concretely, we present the following contributions:

\begin{itemize}
	\item We introduce the first equational theory for bounded binary circuits with the FVP;
	\item We prove that this equational theory models all properties of Boolean algebra.
\end{itemize}

This equational theory serves as a building block for automatic protocol verification tooling based on narrowing to verify protocols with specific implementations. Instead of using abstract encryption schemes and digital signatures schemes, the encryption/decryption and sign/verify functionality can be modelled using binary circuits. This allows the tooling to find more complex attacks on the protocol using implementation details.

\subsection{Outline of the Paper}
In Section \ref{sec:background} we provide the necessary background information about symbolic verification and equational theories. Next, we discuss related work in Section \ref{sec:related-work}. Further, we present our strategy for modelling arbitrary binary circuits in Section \ref{sec:modelling-binary-circuit-in-maude-npa}. In Section \ref{sec:verifying-small-circuit} we formally verify a small circuit as a proof-of-concept protocol. Finally, in Section \ref{sec:results} we provide a benchmark of our approach, which is followed by a conclusion in Section \ref{sec:conclusion}.

\section{Background}
\label{sec:background}
In this section, we provide the relevant background for symbolic verification and equational theories. In this paper, we follow the notation from \cite{escobar-variant-narrowing}. The mathematical definitions in the following sections are necessary to define the finite variant property.
\subsection{Security Analysis}
A cryptographic protocol in the symbolic model is generally defined as an order-sorted rewrite theory $\RwTh = (\Sig, \EqTh, \RwRules)$, where $\Sigma$ is a signature, $\EqTh$ is an equational theory and $\RwRules$ contains the rewrite rules of the protocol. In the following sections, we elaborate further on these concepts. Intuitively, the equational theory captures the properties of the cryptographic functions, while the rewrite rules capture the communication rules in the protocol. We then perform symbolic reachability analysis to determine whether a state that constitutes a security violation can be reached given the rewrite rules in $\RwRules$ modulo $\EqTh$. If such a state is found, this constitutes an attack. This reachability analysis needs to explore an infinite state space, as the attacker can in principle send any message they want.

\subsubsection{Signature}
A signature $\Sig=((\SortSet, \leq), \FunSet)$ consists of a finite partially-ordered set of sorts $(\SortSet, \leq)$ and a finite set of function symbols $\FunSet$ over $\SortSet$. Function symbols generally represent algorithmic operations that are used within a protocol. For example, take $\FunSet=\{enc: \sort{Msg} \times \sort{Key} \to \sort{Msg}\}$ and $\SortSet=\{\sort{Msg}, \sort{Key}\}$ with $\sort{Key}\leq\sort{Msg}$.  A top sort is a sort $s \in \SortSet$ such that there is no sort $s' \in \SortSet$ for which $s < s'$. In the previous example, \sort{Msg} is a top sort.

Sorts can be though of as types of messages. Sometimes it is enough to use a single type. In this work, sorts are necessary to construct specific equational theories. They can also be used to improve the efficiency of the protocol analysis tool, because the search space for attacks becomes smaller. However, modelling all data types as a single sort gives stronger security guarantees. 

Let $\VarSet=\bigcup_{s\in \SortSet}\VarSet_s$ be the union of a mutually disjoint family of sets where each $\VarSet_s$ is a countably infinite set denoting the variables of sort $s$. The term algebra is denoted $\TermAlg{\Sig}{\VarSet}$. The elements of a term algebra are called terms. Given a term $t$, we let $\vars(t)$ denote the set of distinct variables that occur in term $t$.

A term can be just a variable $x$ of sort $s$ from $\VarSet_s$ or a more complex nested structure like $dec(enc(x, k), k)$, with $dec, enc \in \FunSet$ and $x, k \in\VarSet$. Then $\vars(dec(enc(x, k), k))=\{x, k\}$.

\subsubsection{Equations and substitutions} An equation is a tuple of terms $(t, t')$, often written as $t = t'$. Given a signature $\Sig$, a set $E$ of $\Sig$-equations induces a congruence relation on $\TermAlg{\Sig}{\VarSet}$ \cite{term-alg-extension}. We let $t =_E t'$ denote that $t$ and $t'$ are in the same congruence class according to $E$.

A substitution $\sigma$ is a mapping from a finite subset of $\VarSet$ to terms. Substitutions are extended to $\TermAlg{\Sig}{\VarSet}$ as usual. For example, let $t = dec(x, k)$ and $\sigma = \{ x \gets enc(z,k)\}$. Now $\sigma(t) = dec(enc(z, k), k)$.

There is an ordering amongst terms. For terms $t$ and $t'$, $t \leq_\EqTh t'$ if there is a substitution $\sigma$ such that $\sigma(t) =_\EqTh t'$. For example, with $t = dec(x, k)$ and $t' = y$ and $\EqTh = \{dec(enc(x, k), k) = x\}$, we have that $t \leq_\EqTh t'$, since $\sigma(t) =_\EqTh t'$ for $\sigma=\{ x \gets enc(y, k), k \gets k\}$.
 	
An equational theory $\EqTh$ is \textit{regular} if for each $t=t'$ in $\EqTh$, we have $\vars(t)=\vars(t')$. A set of equations is called \textit{sort-preserving} if for any substitution $\sigma$ and $t=t'$, $\sigma(t)$ and $\sigma(t')$ have the same sort.

There is also an ordering amongst substitutions. For substitutions $\sigma$ and $\sigma'$ and $X \subseteq \VarSet$, $\sigma \leq_\EqTh \sigma' [X]$ if for all $x \in X$, $\sigma(x) \leq_\EqTh \sigma'(x)$. Equivalently, $\sigma \leq_\EqTh \sigma' [X]$ if there exists a substitution $\sigma''$ such that for all $x \in X$, $\sigma''(\sigma(x)) =_\EqTh \sigma'(x)$. For example, let $\EqTh=\emptyset$, $\sigma = \{x \gets enc(x, k)\}$, $\sigma'= \{x \gets dec(enc(x, k), k)\}$. Then $\sigma \leq \sigma' [\{x\}]$, because $\sigma''(\sigma(x)) =_\EqTh \sigma'(x) = dec(enc(x, k), k)$ for $\sigma''=\{y \gets dec(y, k)\}$.

\subsubsection{Rewrite rules}
\label{eq_theory_background}
A rewrite rule is a pair of terms $l$ and $r$, denoted as $l \rewritesto r$, where $l\notin \VarSet$ and ${\vars(r)\subseteq\vars(l)}$. A term is in normal form if it cannot be rewritten any further. A set of rewrite rules $\RwRules$ is \textit{sort-decreasing} if for each $l\rewritesto r$  in $\RwRules$ and any substitution $\sigma$, if $\sigma(l)$ has sort \sort{s} then $\sigma(r)$ has sort $\sort{s}$. 

Given an order-sorted rewrite theory $\RwTh=(\Sig, \EqTh, \RwRules)$, we define a relation $\rewritesMod$ as rewriting in $\RwRules$ modulo $\EqTh$: for all terms $t$ and $t'$, we have $t\rewritesMod t'$ if $l \rewritesto r\in\RwRules$ and there exists a substitution $\sigma$ such that $\sigma(l) =_\B t^*$ where $t^*$ is $t$ or any subterm of $t$ and $t'$ is exactly equal to $t$ with $t^*$ replaced by $\sigma(r)$.

For example, let 
\begin{align*}
\EqTh		&=\emptyset,  \\
\RwRules 	&=\{\mathit{first}(\mathit{tuple}(x, y))\rewritesto x\}, \\
t 			&= \mathit{first}(\mathit{tuple}(\mathit{hash}(x), y)), \\
t'			&=hash(x).
\end{align*}

We have that $t \rewritesMod t'$ using 
\begin{align*}
l 		&= \mathit{first}(\mathit{tuple}(x, y)), \\
r 		&= x, \\
\sigma	&=\{x \gets \mathit{hash}(x), y \gets y\},\\
t^* 	&= t. 
\end{align*}

Note that $\sigma(l) = \mathit{first}(\mathit{tuple}(\mathit{hash}(x), y)) = t = t^*$ and $\sigma(r) = \mathit{hash}(x) = t'$.

Let $\rewritesModCl$ denote the reflexive, transitive closure of $\rewritesMod$. We say that terms $t_1$ and $t_2$ are $\EqTh$\textit{-confluent}, if there exist terms $t_1'$ and $t_2'$ such that $t_1\rewritesModCl t_1'$, $t_2 \rewritesModCl t_2'$ and $t_1' =_\EqTh t_2'$. $\RwRules$ is $\EqTh$-confluent if and only if for any terms $t$, $t_1$ and $t_2$, such that $t\rewritesModCl t_1$ and $t\rewritesModCl t_2$, $t_1$ and $t_2$ are $B$-confluent. Informally, this means that for each term $t$ for which multiple rules apply, we can always apply more rules such that all of the different subresults eventually lead to the same term.

$\RwRules$ is $\EqTh$-\textit{terminating} if and only if there is no infinite rewrite chain for $\rewritesMod$.

$\RwRules$ is $\EqTh$\textit{-coherent} if and only if for any terms $t_1$, $t_2$ and $t_3$ such that $t_1 =_\EqTh t_2$ and $t_1 \rewritesModCl t_3$, there exists a term $t_4$ such that $t_2\rewritesModCl t_4$ and $t_3$ and $t_4$ are $\EqTh$-confluent. In other words, for all terms in the same equivalence class according to $\EqTh$, the order of applications of rules does not matter.

\subsubsection{Unification}
Two terms $t$ and $t'$ are $\EqTh$-unifiable if there exists a substitution $\sigma$ such that $\sigma(t) =_\EqTh \sigma(t')$. In this case, $\sigma$ is referred to as an $\EqTh$-unifier of $t$ and $t'$. A set $\mathcal{S}$ of substitutions is a complete set of $\EqTh$-unifiers for terms $t$ and $t'$ if all substitutions in $\mathcal{S}$ are $\EqTh$-unifiers and for any $\EqTh$-unifier $\sigma'$ of $t$ and $t'$, there exists a substitution $\sigma$ in $\mathcal{S}$, such that $\sigma \leq_\EqTh \sigma'[vars(t) \cup vars(t')]$. More informally, the complete set of $\EqTh$-unifiers of $t$ and $t'$ represents the required $\EqTh$-unifiers of $t$ and $t'$ to build all other $\EqTh$-unifiers for $t$ and $t'$.

An algorithm that generates $\EqTh$-unifiers for arbitrary terms $t$ and $t'$ is said to be complete if it generates a complete set of $\EqTh$-unifiers of $t$ and $t'$. It is additionally said to be finite if the complete set of $E$-unifiers is finite.
For example, take encryption operator ${enc: \texttt{Plaintext} \times \texttt{Key} \rightarrow \texttt{Ciphertext}}$ and ${dec: \texttt{Ciphertext} \times \texttt{Key} \rightarrow \texttt{Plaintext}}$ for sorts \texttt{Msg}, \texttt{Key}, \texttt{Ciphertext} and~\texttt{Plaintext}, where each of those is a subsort of \texttt{Msg}. We have equation $dec(enc(x, k), k) = x$ for $k: \texttt{Key}$ and $x: \texttt{Plaintext}$. Take $t_1 = dec(y, k)$ and $t_2 = z$, with  $y: \texttt{Ciphertext}$, $k: \texttt{Key}$, and $z: \texttt{Plaintext}$. The complete set of unifiers for $t_1$ and $t_2$ is $\{\sigma_1, \sigma_2\}$ with $\sigma_1 = \{y \leftarrow enc(z, k), z \leftarrow z\}$ and $\sigma_2=\{y \gets y, z \gets dec(y, k), k \gets k\}$.

Unification is important in state exploration, since it makes it possible to determine all ways that a certain message could have been created. For example, given a state where the adversary needs to send a specific message to break a security claim, determine how such a message could have been constructed given the transition rules in a rewrite theory $\mathcal{R}$. Unification is much more efficient than enumerating all possible messages an adversary could have sent.

\subsubsection{Decomposition and the finite variant property}
\label{definition:fvp}
Let $\EqTh$ be a $\Sigma$-equational theory. A \textit{decomposition} of $\EqTh$ is a rewrite theory $(\Sig, \B, \Delta)$ such that:
\begin{enumerate}
	\item $\EqTh=\Delta\uplus\B$,
	\item $B$ is regular, sort-preserving and uses top sort variables,
	\item $B$ has a finitary and complete unification algorithm, and
	\item $\Delta$ is sort-decreasing, $B$-confluent, $B$-coherent and $B$-terminating,
\end{enumerate}

Given a $\Sigma$-equational theory $\EqTh$, a decomposition $\RwTh=(\Sig, \B, \Delta)$ of $\EqTh$, and a term $t$, a \emph{variant} of $t$ is a tuple $(t', \sigma)$, such that there exists a term $t''$ and
\begin{enumerate}
	\item	$\sigma(t) \rightarrow_{\B, \Delta}^* t''$,
	\item 	$t''$ is a normal form of $\rightarrow_{\B, \Delta}$, and
	\item   $t'=_\B t''$.	
\end{enumerate}

We define the \textit{variant complexity} as the number of variants for all terms:
\[
	\vc(\RwTh) = \sum_{f\in\FunSet} v(f)
\]
where $\FunSet$ is the set of function symbols in $\Sigma$ and $v(f)$ is the cardinality of the complete set of unifiers \cite{homenc-fvp}.

Then we say that $\RwTh$ has the Finite Variant Property (FVP) if and only if $\vc(\RwTh)$ is finite.
In \cite{escobar-variant-narrowing} it is shown that if a decomposition of an equational theory $\EqTh$ has the finite variant property, there exists a finitary $\EqTh$-unification algorithm, that computes a complete and minimal set of $\EqTh$-unifiers. 

\subsection{Automated Verification}
\label{sssec:automated-verification}
As stated in the introduction, the symbolic model allows us to use automated tools for the formal verification of cryptographic protocols and implementations thereof. Symbolic security analysis tools such as Tamarin \cite{meier2013tamarin}, ProVerif \cite{proverif} and Maude-NPA \cite{maude-npa} allow for fully automating security proofs. Tamarin and Proverif support user-provided equational theories with the FVP that only use two sorts: one to represent general messages and one to represent randomly generated values or user-provided data. In contrast, Maude-NPA allows a user to model a wider range of equational theories. Since the equational theories proposed in this work require extra sorts, we used Maude-NPA for protocol verification.

 \section{Related Work}
\label{sec:related-work}
The first algorithm that could produce a complete set of unifiers modulo an equational theory uses a technique called narrowing \cite{narrowing}. However, this algorithm does not terminate for many equational theories. Jouannaud \cite{split-eq-theory} devised an algorithm that splits up the equational theory into a set of equations $\Delta$ and a set of rewrite rules $B$, as we did in the previous section, and provides a complete set of unifiers under more favourable conditions on $\Delta$ and $B$. This algorithm however, was still rather inefficient due to the fact that many narrowing sequences need to be considered. Escobar et al. \cite{variant-unification} introduced a more efficient procedure that is also complete, called \textit{variant narrowing}, which provides a complete set of unifiers in finite time for any equational theory with the FVP. 

Recall that one of the requirements for a decomposition $(\Sigma, B, \Delta)$ to have the FVP, is for $B$ to have a finitary and complete  unification algorithm. At the moment, there are finitary unification algorithms for:
\begin{enumerate}
	\item commutativity \cite{unificiation-com-assoc}
	\item associativity \cite{unificiation-com-assoc}
	\item commutativity and associativity \cite{unification-com-and-assoc}
	\item idempotence \cite{unification-idemp}
	\item commutativity and associativity and idempotence \cite{unification-com-and-assoc-and-idemp}
	\item abelian group theory \cite{unificiation-abelian}
\end{enumerate}

As shown in the introduction, the problem of unifying modulo the distributivity rule of Boolean logic and the homomorphic property of partially homomorphic encryption are similar. These properties cannot be deconstructed into $B$ and $\Delta$ according to the requirements mentioned in section \ref{definition:fvp}, as shown by Yang et al. \cite{homenc-fvp}. The same authors have proposed approximations of non-FVP theories such that variant narrowing can be applied to unify the combined equational theory of encryption-decryption cancellation and this homomorphic property. One proposal overestimates the homomorphic property by modelling it as an abelian group, which is known to have the FVP. However, if the tool would find an attack using this equational theory, it could be a false attack, because the attack might use mathematical properties of the abelian group that are not applicable to partially homomorphic encryption. Another proposal sets a bound on the number of homomorphic operations that the analysis supports. This is an underapproximation, because an attack that would break the security of a protocol that takes more than the modelled number of homomorphic operations would be missed by a tool that uses this approximation as the equational theory. It therefore underestimates the theory. The equational theory of bounded partially homomorphic encryption of \cite{homenc-fvp} that has the FVP supports a free operator, which means that there are no mathematical properties like associativity, commutativity and identity. We extend this theory for Boolean logic, including its associativity, commutativity and identity properties using a minimal set of operators, namely the \texttt{or}-operator, the \texttt{not} operator and the value $\top$, which represents the value \texttt{True}. With this equational theory, arbitrary computations up to a certain circuit size can be modelled. We additionally prove that this equational theory attains the FVP, which means it can be used in automatic protocol verification by applying variant narrowing for unification. When this equational theory is used within automated protocol verifcation, attacks can be found that perform arbitrary computations, as long as the attack has a circuit size that is smaller than the circuit size chosen in the equational theory.

\section{Modelling Arbitrary Computations in Maude-NPA}
\label{sec:modelling-binary-circuit-in-maude-npa}
In this section, we formalise a decomposition for binary circuits of bounded depth and prove that they model Boolean logic up to the specified depth. Additionally, using an approach similar to the one in \cite{homenc-fvp}, we show how these languages can be implemented as an equational theory with the FVP.

\subsection{Formalising Binary Circuits of Bounded Size}
To model bounded binary circuits, we use the set of sorts $S=\{\bit,\circuit\}$, with $\bit\leq\circuit$. The related function set consists of four functions. 
Firstly, $\top$ is a nullary function that returns a $\bit$, denoting true. Secondly, we have two types of negation: one maps bits to bits and the other maps circuits to circuits. Hence, if $b$ has sort \bit, then $\neg b$ will also have sort $\bit$. Third, we also have disjunction on circuits. Note that using disjunction and negation, we can model all other Boolean operations. We will use infix notation for disjunction in this section. Finally, for variables, we use $\VarSet_{\bit}=\{b_0, b_1, \ldots\}$ and $\VarSet_\circuit=\{c_0,c_1,\ldots\}$. So, as a function set $\FunSet$ we have:
\begin{align*}
	\top~&:~\to\bit \\
	\neg~&:~\bit\to\bit \\
	\neg~&:~\circuit\to\circuit \\
	\lor~&:~\circuit\times\circuit\to\circuit
\end{align*}

For circuits that have been defined in terms of bits, we define the \textit{size} (written $\rank{\cdot}$) of a circuit as the number of disjunctions:
\begin{align*}
	\rank{\top} = \rank{b_i} &= 0, \\
	\rank{\neg{c_i}} &= \rank{c_i}, \\
	\rank{c_i\lor c_j} &= \rank{c_i} + \rank{c_j} + 1.
\end{align*}

We now restrict our attention to terms with a size less than or equal to a number $k$. We denote by $\lang_k$ the language consisting of all circuits of size at most $k$:
\[
\lang_k = \{c\mid \rank{c} \leq k\}.
\]
For each of these languages $\lang_k$, we aim to find an equational theory $\EqTh_k$ with the following properties:
\begin{enumerate}
	\item $\EqTh_k$ captures the behaviour of the circuits in $\lang_k$. That means that two logically equivalent circuits must also be provably equivalent in the equational theory.
	\item $\EqTh_k$ has the FVP. We will prove this by giving a decomposition $(\Sigma, \B_k, \Delta_k)$ and show it satifies each of the properties in section \ref{definition:fvp}
\end{enumerate}
We must first define when two circuits are logically equivalent. We use the standard semantics from propositional logic. Let $\Val: \VarSet_{\bit}\to \{0,1\}$ be a valuation that maps circuits to zero or one. Then we define the relation $\models$ inductively as follows:
\begin{align*}
	\Val\models\top &\text{ always holds}, \\
	\Val\models b_i &\text{ if and only if }\Val(b_i)=1, \\
	\Val\models\neg c_i &\text{ if and only if not }\Val\models{c_i}, \\
	\Val\models c_i\lor c_j &\text{ if and only if }\Val\models{c_i}\text{ or }\Val\models{c_j}.
\end{align*}
Two circuits $c_i$ and $c_j$ are logically equivalent (written as $c_i\iff c_j$) if and only if for all possible valuations $\Val$, we have $\Val\models c_i$ if and only if $\Val\models c_j$. We are then looking for decompositions $(\Sig, \B_k, \Delta_k)$ that are \textit{sound} and \textit{complete} with respect to the circuit in $\lang_k$. That is, for all circuits $c_i,c_j\in\lang_k$: $c_i\iff c_j$ if and only if $c_i$ and $c_j$ rewrite (modulo $\B_k$) to the same normal form.

\subsection{Find the equations for bounded circuits}
\label{sec:findeq}
Intuitively, one could just take the axioms of Boolean logic as rewrite rules. However, this system does not have the FVP. This is easy to see, as the variants of
\[
c_0:\sort{circuit} \vee c_1:\sort{circuit}
\] 
are 
\begin{gather*}
	(c_0'\vee c_1', \{c_0 \leftarrow c_0', c_1 \leftarrow c_1'\}) \\
	(c_0'\vee (c_1'\vee c2'), \{c_0 \leftarrow c_0', c_1 \leftarrow c_1' \vee c_2'\}) \\
	\vdots
\end{gather*}
This produces infinitely many variants. Therefore, we must also restrict our axioms to circuits of bounded size. In the following subsections we give the decompositions for the equational theories $\EqTh_0$ to $\EqTh_3$ which axiomatise $\lang_0$ to $\lang_3$ respectively.

Soundness (equivalence of normal form implies logical equivalence) follows from the fact that all the equations in $\B_k$ and $\Delta_k$ preserve logical equivalence. So for all $l=r$ in our decomposition, we have $l\iff r$. This can easily be checked using truth tables and is left to the reader.

Proving completeness (logical equivalence implies equivalence of normal form) is less trivial. To prove this, we will often employ the following lemma.
\begin{lemma}\label{lem:nf}
	Take a decomposition $(\Sigma, \B, \Delta)$ of an equational theory $\EqTh$. Then $\EqTh$ is complete for a language $\lang$ if
	\begin{enumerate}
		\item all equations in $\B$ and rules in $\Delta$ preserve logical equivalence, and
		\item if two normal forms are not equivalent modulo $\B$ then they are not logically equivalent.
	\end{enumerate}
\end{lemma}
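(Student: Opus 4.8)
The plan is to derive completeness directly from the two hypotheses, leaning on the properties that a decomposition is guaranteed to enjoy. Recall that completeness for $\lang$ means: for all $c_i, c_j \in \lang$ with $c_i \iff c_j$, the circuits $c_i$ and $c_j$ reduce modulo $\B$ to a common normal form. First I would invoke the defining conditions of a decomposition: since $\Delta$ is $\B$-confluent and $\B$-terminating, the relation $\rewritesMod$ is terminating and confluent, so every term has a normal form that is unique up to $=_\B$. Fix a normal form $n_i$ of $c_i$ and $n_j$ of $c_j$.

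The second step is to transport logical equivalence down to these normal forms. Here I would use hypothesis (1), but I would first strengthen it from ``each equation and rule preserves equivalence'' to ``each individual rewrite step preserves equivalence''. This rests on two standard properties of the semantics $\models$: that logical equivalence is closed under substitution (if $l \iff r$ then $\sigma(l) \iff \sigma(r)$, because any valuation of $\sigma(l)$ factors through the induced valuation of $\vars(l)$), and that it is a congruence for $\neg$ and $\lor$ (if $t \iff t'$ then $C[t] \iff C[t']$ for every context $C$, by induction on $C$ using the inductive clauses defining $\models$). Together these yield that any single step $u \rewritesMod u'$ satisfies $u \iff u'$, and the same congruence argument shows that $\B$-equivalent terms are logically equivalent. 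By transitivity along the two reduction sequences we then get $c_i \iff n_i$ and $c_j \iff n_j$, and since $c_i \iff c_j$ it follows that $n_i \iff n_j$.

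Finally I would apply hypothesis (2) in contrapositive form: as $n_i$ and $n_j$ are logically equivalent normal forms, they must be equivalent modulo $\B$, i.e. $n_i =_\B n_j$. Hence $c_i$ and $c_j$ reduce to one and the same normal form modulo $\B$, which is exactly the completeness statement.

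I expect the only genuine obstacle to be the lifting carried out in the second step. Hypothesis (1) as phrased constrains only the endpoints $l$ and $r$ of the equations and rules, whereas the rewrite relation applies these schemas under substitutions and inside arbitrary contexts. The crux is therefore to verify that logical equivalence is a congruence that is closed under substitution, so that equivalence of the rule schemas propagates to equivalence of the full terms at every step. Once this is secured, the remainder is a short transitivity-and-contrapositive argument that merely rides on the confluence and termination already built into the decomposition.
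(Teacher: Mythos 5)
Your proof is correct and follows essentially the same route as the paper's: transport logical equivalence from the circuits to their normal forms using hypothesis (1), then apply hypothesis (2) in contrapositive to conclude that the normal forms agree modulo $\B$. The only differences are refinements of detail — you ground existence and uniqueness of normal forms in the termination and confluence already required of a decomposition (the paper instead derives uniqueness from the two hypotheses themselves), and you make explicit the substitution-closure and congruence argument needed to lift rule-level equivalence preservation to rewrite-step-level preservation, a step the paper uses implicitly without comment.
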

\begin{proof}
	First of all, note that if a decomposition $(\Sigma, \B, \Delta)$ satisfies both conditions, then each circuit has a single normal form modulo $\B$. Namely, suppose a circuit $c$ has two normal forms $c'$ and $c''$ and $c'\neq_\B c''$. Since $\Delta$ preserves logical equivalence we have:
	\[
	c'\iff c\iff c''
	\]
	which violates the second condition for our equation theory. Therefore, each circuit has a single normal form modulo $\B$. Next, we show that the decomposition is complete for $\lang$. Take any two circuits $c_0,c_1\in\lang$ such that $c_0\iff c_1$. Then they both have a single normal form $c'_0$ and $c'_1$. By the first condition, we know that $c'_0\iff c_0$ and $c'_1\iff c_1$. Therefore
	\[
	c'_0\iff c_0\iff c_1\iff c'_1.
	\]
	Therefore, by contrapositive of the second condition, it follows that $c'_0=_\B c'_1$.
\end{proof}

In the following sections, we will provide the equational theories and their decompositions for each language $\lang_0$ to $\lang_3$ respectively. For each decomposition we will prove completeness. As mentioned before, the soundnes proof using truth tables is omitted here for brevity. That the decompositions have all the required properties, including the FVP, will be shown in section \ref{ssec:ThInMaude} using Maude-NPA.

\subsubsection{The language $\lang_0$}
For $\lang_0$ we use a decomposition of $\EqTh_0$ where $\B_0=\emptyset$ and $\Delta_0$ consists of a single rule, namely the double negation elimination (see Table \ref{L0:B}). Using this rule, every circuit in $\lang_0$ can be reduced to either $\pm\top$ or $\pm b_i$ for some $i$. Here, $\pm x$ denotes either $x$ or $\neg x$. Those are the normal forms, numbered 0.1 and 0.2 in Table \ref{L0:nf}.
\begin{table}[!ht]
	\caption{The system $\Delta_0$}
	\centering
	\begin{tabular}{l r c l}
		\hline
		Double Negation & $\neg\neg c_0$ & $\rewritesto$ & $c_0$ \\
		\hline
	\end{tabular}
	\label{L0:B}
\end{table}
\begin{theorem}
	$(\Sig, \B_0,\Delta_0)$ is complete for $\lang_0$.
\end{theorem}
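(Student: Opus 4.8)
The plan is to invoke Lemma~\ref{lem:nf}, which reduces completeness for $\lang_0$ to its two hypotheses. The first hypothesis---that every equation in $\B_0$ and every rule in $\Delta_0$ preserves logical equivalence---is immediate here: $\B_0=\emptyset$, and the sole rule of $\Delta_0$ is double negation elimination, for which $\neg\neg c\iff c$ holds under every valuation. So all the actual content sits in the second hypothesis, namely that any two normal forms which are not equal modulo $\B_0$ are not logically equivalent. Since $\B_0=\emptyset$, equality modulo $\B_0$ is plain syntactic equality, so the task collapses to showing that any two \emph{distinct} normal forms can be separated by some valuation.

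First I would pin down exactly what the normal forms of $\lang_0$ are. Every circuit in $\lang_0$ has size $0$ and hence contains no disjunction, so it is a chain of negations applied to an atom $\top$ or $b_i$; in particular everything stays at sort $\bit$, since the circuit sort is only populated through disjunction. The rule $\neg\neg c_0\rewritesto c_0$ strictly decreases the number of negations by two, so it terminates, and a term is in normal form precisely when it contains no $\neg\neg$ pattern. Collapsing a negation chain according to the parity of its length then shows the normal forms are exactly $\top$, $\neg\top$, $b_i$ and $\neg b_i$ (entries 0.1 and 0.2 of Table~\ref{L0:nf}).

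Next I would verify the second hypothesis by a finite case analysis over the unordered pairs of distinct normal forms, exhibiting for each pair a valuation $\Val$ that distinguishes them. The constants $\top$ and $\neg\top$ are separated at once, being always true and always false; each is separated from a literal $b_i$ or $\neg b_i$ by choosing $\Val(b_i)$ so that the literal takes the opposite truth value. For two literals, $b_i$ and $\neg b_i$ disagree under every valuation, and for $i\neq j$ one picks $\Val(b_i)$ and $\Val(b_j)$ so that exactly one of the two literals holds. As every pair is thereby separated, the second hypothesis holds and Lemma~\ref{lem:nf} delivers completeness.

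The main obstacle is not any deep step but rather making the enumeration airtight: I must be certain the list of normal forms is exhaustive---in particular that no disjunction can survive in a size-$0$ circuit and that the $\bit$/$\circuit$ sort distinction introduces no further normal forms---and that the finitely many pairs of distinct normal forms are all covered by explicit separating valuations. Everything beyond this is routine truth-table checking of the kind the text already defers to the reader.
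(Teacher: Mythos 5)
Your proposal is correct and follows essentially the same route as the paper: it reduces everything to Lemma~\ref{lem:nf}, shows by a negation-counting (parity) argument that every circuit in $\lang_0$ rewrites to $\pm\top$ or $\pm b_i$, and then checks that distinct normal forms are separated by valuations. You merely spell out the separating-valuation case analysis that the paper leaves to the reader as a truth-table check.
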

\begin{proof}
	We can show by induction on the number of negations in the terms that every circuit can be rewritten to one of the normal forms of $\lang_0$: $\pm\top$ or $\pm b_i$. Since the rewrite rule preserves logical equivalence and the normal forms are not logically equivalent, the completeness follows from Lemma \ref{lem:nf}.
\end{proof}
\begin{table}[!ht]
	\caption{The normal forms for $\EqTh_0$}
	\centering
	\begin{tabular}{l l}
		\hline
		0.1 & $\pm\top$ \\
		0.2 & $\pm b_0$ \\
		\hline
	\end{tabular}
	\label{L0:nf}
\end{table}
\subsubsection{The language $\lang_1$}
When we allow one disjunction into the language, we let $\Delta_1$ contain the four rules in Table \ref{L1:B}. $\B_1$ contains only commutativity for disjunction.
\begin{table}[!ht]
	\caption{The system $\Delta_1$}
	\centering
	\begin{tabular}{l r c l}
		\hline
		& \multicolumn{3}{c}{All equations from $\Delta_0$} \\
		Annihilation & $c_0\lor\top$ & $\rewritesto$ & $\top$ \\
		Identity & $c_0\lor\neg\top$ & $\rewritesto$ & $c_0$ \\
		Idempotence & $b_0\lor b_0$ & $\rewritesto$ & $b_0$ \\
		Complementation & $b_0\lor\neg b_0$ & $\rewritesto$ & $\top$ \\
		\hline
	\end{tabular}
	\label{L1:B}
\end{table}
\begin{theorem}
	$(\Sig, \B_1,\Delta_1)$ is complete for $\lang_1$.
\end{theorem}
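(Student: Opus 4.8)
The plan is to invoke Lemma~\ref{lem:nf}, so the only real work is verifying its two hypotheses for the decomposition $(\Sig,\B_1,\Delta_1)$. The first hypothesis---that every equation of $\B_1$ and every rule of $\Delta_1$ preserves logical equivalence---is exactly the soundness check already delegated to truth tables, so I would only note that commutativity together with each of the four rules of $\Delta_1$ is a valid Boolean identity. The substance of the proof is the second hypothesis: that two normal forms which are not equal modulo $\B_1$ are not logically equivalent. Equivalently, I would show that the map sending each $\B_1$-normal form to its truth function is injective up to commutativity.

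First I would pin down the $\B_1$-normal forms of $\lang_1$. Using a size-and-negation measure I would argue that $\Delta_1$ terminates and then read off the irreducible terms. Double negation elimination forces at most one negation in front of any subterm; annihilation and identity remove every occurrence of $\top$ and $\neg\top$ from a disjunction; and idempotence and complementation (which also fire on a disjunct of the form $\neg b_i$, since $\neg b_i$ has sort $\bit$) force the two disjuncts to be literals that are neither equal nor complementary, hence literals on distinct variables. Thus, besides the size-$0$ forms $\pm\top$ and $\pm b_i$ inherited from $\lang_0$, the only size-$1$ normal forms are disjunctions $\ell\lor\ell'$ of two such literals and their negations $\neg(\ell\lor\ell')$, each taken modulo the commutativity in $\B_1$.

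It then remains to check that these normal forms realise pairwise distinct truth functions. I would separate them by the number of variables on which the truth function genuinely depends: $\pm\top$ are the two constants, $\pm b_i$ depend on exactly one variable, and both $\ell\lor\ell'$ and $\neg(\ell\lor\ell')$ depend on exactly two. Within the two-variable forms, a disjunction is true on three of the four assignments to its variables whereas a negated disjunction is true on exactly one, so the two families are disjoint; and within each family the normal form is recovered from its truth function, since the unique falsifying (respectively satisfying) assignment fixes the polarity of each literal. Hence two logically equivalent such forms carry the same unordered pair of literals and are therefore equal modulo $\B_1$, and completeness follows from Lemma~\ref{lem:nf}.

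The main obstacle, relative to the $\lang_0$ case, is the appearance of the negated-disjunction normal forms $\neg(\ell\lor\ell')$: because $\Delta_1$ contains no De Morgan rule, negation is never pushed inside a disjunction, so these terms are genuinely irreducible and must be listed and distinguished alongside the disjunctions. I expect the bookkeeping of the normal-form enumeration---verifying that idempotence and complementation really do fire on negated-literal disjuncts through the subsort $\bit\leq\circuit$, and that no $\top$ or $\neg\top$ can survive inside a disjunctive normal form---to be where care is needed, rather than the final comparison of truth functions.
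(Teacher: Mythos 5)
Your proposal is correct and follows essentially the same route as the paper's proof: invoke Lemma~\ref{lem:nf}, dispose of soundness by truth tables, identify the $\B_1$-normal forms of $\lang_1$ (your $\pm\top$, $\pm b_i$, $\pm(\ell\lor\ell')$ on distinct variables coincide with Table~\ref{L1:nf}, since the paper's normal form 1.1 ranges over bits and hence over negated literals), and check that distinct normal forms are logically inequivalent. The only differences are ones of emphasis --- the paper establishes the normal forms by an explicit case analysis reducing every circuit of $\lang_1$ to a table entry and delegates their pairwise inequivalence to truth tables, whereas you obtain the normal forms by a termination-plus-irreducibility argument and spell out the inequivalence via the variable-dependence and satisfying-assignment count --- but both verify the same two hypotheses of the same lemma for the same decomposition.
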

\begin{proof}
	We must show two things. First of all, each equation in $\B_1$ and $\Delta_1$ must preserve logical equivalence. This can easily be checked using truth tables. Secondly, if two normal forms are not equivalent modulo $\B_1$, then they must not be logically equivalent. For this, we claim that all circuits in $\lang_1$ can be reduced to a normal form from Table \ref{L1:nf}. Using truth tables, we see that no two of these normal forms are logically equivalent. Therefore, it remains to show that each circuit can be reduced to a normal form from Table \ref{L1:nf}.
	\begin{table}[!ht]
		\caption{The normal forms for $\EqTh_1$}
		\centering
		\begin{tabular}{l l}
			\hline
			0.1 & $\pm\top$ \\
			0.2 & $\pm b_i$ \\
			1.1 & $\pm(b_0\lor b_1)$ \\
			\hline
		\end{tabular}
		\label{L1:nf}
	\end{table}
	First of all, note that all circuits without disjunction can be rewritten to a normal form in $\lang_0$, since we include $\Delta_0$ in our theory. All other circuits are of the form: ${\pm(c_0\lor c_1)}$ where $c_0,c_1\in\lang_0$. Since $\Delta_0\subset\Delta_1$, we can assume that $c_0$ and $c_1$ are normal forms in $\lang_0$ (otherwise, we could rewrite them to be in normal form). This gives the following case distinction:
	\begin{itemize}
		\item If $c_0=\pm\top$ or $c_1=\pm\top$, then we can reduce the circuit to a normal form in $\lang_0$ using annihilation or identity, together with commutativity.
		\item If $c_0=b_0$ and $c_1=b_1$ for some bits $b_0, b_1$, then we can do a case distinction:
		\begin{itemize}
			\item If $b_0=b_1$, then we can reduce the circuit to $b_0$ (normal form 0.2) using idempotence.
			\item If $b_1=\neg b_0$, then we can use complementation to reduce the circuit to $\top$, normal form 0.1.
			\item Otherwise, the circuit cannot be rewritten using any circuit and we have normal form 1.1.
		\end{itemize}
	\end{itemize}
	Hence the decomposition $(\Sig, \B_1,\Delta_1)$ is complete for $\lang_1$ by lemma \ref{lem:nf}. Note that normal form 1.1 also covers circuits of the form $b_0\lor\neg b_1$, since $\neg b_1$ is also of sort bit.
\end{proof}
Using $\B_1$ and $\Delta_1$, it is always possible to eliminate $\top$ and $\neg\top$ from any non-trivial circuit. Hence, from now on, we will in proofs assume that circuits do not contain $\top$.
\subsubsection{The language $\lang_2$}
For the the decomposition $(\Sig, \B_2,\Delta_2)$ that axiomatises $\lang_2$, we let $\B_2$ contain commutativity and associativity for disjunction and the rules in $\Delta_2$ are given in Table \ref{L2:B}. $\Delta_2$ contains of all rules in $\Delta_1$. In addition, it contains the reduction rules for absorption. We must also include some rule for distributivity. However, the issue is that we cannot add the rule
\[
\neg(b_0\lor b_1)\lor b_2\rewritesto \neg(\neg(\neg b_0\lor b_2)\lor\neg(\neg b_1\lor b_2))
\]
to $\Delta_2$ since the circuit on the right has size 3. However, if $b_2=\pm b_0$ (or $\pm b_1$) then we can reduce the circuit on the right to a circuit of size 2. Therefore, we add the rule DistrCompl to $\Delta_2$. The case where $b_2=\neg b_0$ is covered by the absorption law. 
\begin{table}[!ht]
	\caption{The system $\Delta_2$}
	\centering
	\begin{tabular}{l r c l}
		\hline
		& \multicolumn{3}{c}{All equations from $\Delta_1$} \\
		DistrCompl & $\neg(b_0\lor b_1)\lor b_0$ & $\rewritesto$ & $b_0\lor\neg b_1$ \\
		Absorption & $\neg(b_0\lor b_1)\lor \neg b_0$ & $\rewritesto$ & $\neg b_0$ \\
		Absorption (dual) & $\neg(\neg b_0\lor b_1)\lor b_0$ & $\rewritesto$ & $b_0$ \\
		\hline
	\end{tabular}
	\label{L2:B}
\end{table}
\begin{theorem}
	$(\Sig, \B_2,\Delta_2)$ is complete for $\lang_2$.
\end{theorem}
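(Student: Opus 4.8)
The plan is to invoke Lemma~\ref{lem:nf}, which reduces completeness to two obligations: that every equation in $\B_2$ and rule in $\Delta_2$ preserves logical equivalence, and that any two normal forms which are inequivalent modulo $\B_2$ are also logically inequivalent. The first obligation is a routine truth-table check for the three new rules in Table~\ref{L2:B}, the rules inherited from $\Delta_1$ having already been verified. For the second, I first fix the candidate normal forms: in addition to $\pm\top$, $\pm b_0$, and $\pm(b_0\lor b_1)$ inherited from $\lang_1$, I expect exactly two new shapes of size $2$, namely the flat disjunction $\pm(b_0\lor b_1\lor b_2)$ of three pairwise-independent literals, and the form $\pm(\neg(b_0\lor b_1)\lor b_2)$ in which the three literals again involve distinct underlying bit variables. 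As in the $\lang_1$ proof, the symbols $b_i$ range over literals, so these shapes subsume every sign pattern.

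The core of the argument is showing that every circuit in $\lang_2$ rewrites to one of these forms. Using the remark following the $\lang_1$ proof, I may assume that no $\top$ occurs, and since $\Delta_1\subset\Delta_2$ and $\B_1\subseteq\B_2$, I may assume that all proper subterms are already in normal form. A genuinely size-$2$ circuit then has the shape $\pm(A\lor\ell_0)$ after double-negation elimination, where $\ell_0$ is a literal and $A$ is an irreducible size-$1$ circuit, that is $A=\pm(\ell_1\lor\ell_2)$ with $\ell_1,\ell_2$ independent. This yields two cases. If $A=\ell_1\lor\ell_2$ is unnegated, associativity and commutativity flatten the circuit to $\ell_0\lor\ell_1\lor\ell_2$; if any two of these literals coincide I reduce by idempotence, if any two are complementary I reduce by complementation followed by annihilation, and otherwise the circuit is the normal form $\pm(b_0\lor b_1\lor b_2)$. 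If instead $A=\neg(\ell_1\lor\ell_2)$, I split on whether $\ell_0$ depends on $\ell_1$ or $\ell_2$: when $\ell_0$ equals one of them I apply DistrCompl (modulo commutativity, and instantiating the rule's bit variable by $\neg b$ to cover negated literals), when $\ell_0$ is the complement of one of them I apply Absorption or its dual, and each such application yields a circuit of size at most $1$ that is in normal form; when $\ell_0$ is independent of both, no rule fires and the circuit is the normal form $\pm(\neg(b_0\lor b_1)\lor b_2)$. The outer $\pm$ is handled uniformly by double-negation elimination.

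The step I expect to be the main obstacle is verifying that this case split for the negated-inner form is genuinely exhaustive, and that DistrCompl, Absorption, and Absorption~(dual) together cover every dependent configuration across all four sign combinations of the matching literals. This is precisely the delicate point flagged in the discussion preceding Table~\ref{L2:B}: full distributivity would rewrite $\neg(b_0\lor b_1)\lor b_2$ to a circuit of size $3$, leaving $\lang_2$, so one must confirm that the only circuits on which no rule fires are exactly those with three mutually independent literals, which are then legitimate irreducible normal forms rather than missed redexes. Finally I would discharge the inequivalence obligation by exhibiting, for each pair of distinct normal forms, a separating valuation; the only nontrivial comparison is $b_0\lor b_1\lor b_2$ against $\neg(b_0\lor b_1)\lor b_2$, which differ at the valuation sending $b_0\mapsto 1$, $b_1\mapsto 0$, $b_2\mapsto 0$. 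Completeness then follows from Lemma~\ref{lem:nf}.
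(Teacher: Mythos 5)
Your proposal is correct and follows essentially the same route as the paper: invoke Lemma~\ref{lem:nf}, fix the normal forms of Table~\ref{L2:nf}, write a size-$2$ circuit as $\pm(c_0\lor c_1)$ with $c_0$ an irreducible size-$1$ normal form and $c_1$ a literal, and close the cases with associativity/commutativity plus idempotence, complementation, DistrCompl, and Absorption (and its dual). Your treatment is in fact slightly more explicit than the paper's on two points it leaves implicit --- that the rules' bit variables may be instantiated by negated literals to cover all sign patterns, and the concrete separating valuation for the normal-form inequivalence check --- but the argument is the same.
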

\begin{proof}
	We deploy the same method as before and show that all circuits in $\lang_2$ can be rewritten to a normal form from Table \ref{L2:nf} and that no two of these normal forms are logically equivalent. The latter can easily be shown using truth tables. 
	\begin{table}[!ht]
		\caption{The normal forms of $\EqTh_2$}
		\centering
		\begin{tabular}{l l}
			\hline
			0.1 & $\pm\top$ \\
			0.2 & $\pm b_0$ \\
			1.1 & $\pm(b_1\lor b_2)$ \\
			2.1 & $\pm(b_1\lor b_2\lor b_3)$ \\
			2.2 & $\pm(\neg(b_1\lor b_2)\lor b_3)$ \\
			\hline
		\end{tabular}
		\label{L2:nf}
	\end{table}
	To prove that all circuits in $\lang_2$ can be rewritten to a normal form, we determine all possible case distinctions we need to cover the circuits in $\lang_2$. All formulas in $\lang_2$ are of the form $c_0\lor c_1$, with $\rank{c_0}=1$ and $\rank{c_1}=0$ (the case where $\rank{c_0}=0$ and $\rank{c_1}=1$ is covered by commutativity). Since $\Delta_0\subset\Delta_1\subset\Delta_2$, we can, without loss of generality, assume that $c_0$ and $c_1$ are normal forms in $\lang_1$ and $\lang_0$ respectively. This leads to the following case distinction:
	\begin{itemize}
		\item Suppose $c_0= b_0\lor b_1$. If $c_1=\pm b_0$ or $\pm b_1$, then we use associativity, together with idempotence or complementation to reduce it to a normal form in $\lang_1$. If $c_1=b_2$, then we obtain normal form 2.1.
		\item Suppose $c_0=\neg(b_0\lor b_1)$. Let us check the possibilities for $c_1$:
		\begin{itemize}
			\item If $c_1=b_0$ or $c_1=b_1$, then we use \mbox{DistrCompl} to reduce the size of the circuit, hence it is equivalent to a normal form in $\lang_0$.
			\item If $c_1=\neg b_0$ or $c_1=\neg b_1$, then we use Absorption to reduce the size of the circuit. Whenever we have a circuit that contains a bit and its negation, we must also add its dual version(s), in which the other occurence has the negation sign.
			\item If $c_1=b_2$, then we have normal form 2.2.
		\end{itemize}
	\end{itemize}
	Since all equations in $\B_2$ and $\Delta_2$ preserve logical equivalence and no two normal forms are logically equivalent, the theorem follows from Lemma \ref{lem:nf}.
\end{proof}
\subsubsection{The language $\lang_3$}
For the decomposition $(\Sig, \B_3,\Delta_3)$ of $\EqTh_3$ for $\lang_3$, we set $\B_3=\B_2$ and $\Delta_3$ contains the rules in Table \ref{L3:B}. We claim this decomposition is complete with respect to $\lang_3$ in the following theorem.
\begin{table}[!ht]
	\caption{The rewrite system $\Delta_3$}
	\centering
	\begin{tabular}{r c l}
		\hline
		\multicolumn{3}{c}{All equations from $\Delta_2$} \\
		\multicolumn{3}{c}{Distributivity (and it's dual):} \\
		$\neg[b_0\lor\neg(b_1\lor b_2)]$ & $\rewritesto$ & $\neg(b_0\lor\neg b_1)\lor\neg(b_0\lor\neg b_2)$ \\
		$\neg[\neg(b_0\lor b_1)\lor\neg(b_0\lor b_2)]$ & $\rewritesto$ & $b_0\lor\neg(\neg b_1\lor \neg b_2)$ \\
		\multicolumn{3}{c}{Idempotence size 1} \\:
		$\neg(b_0\lor b_1)\lor\neg(b_0\lor b_1)$ & $\rewritesto$ & $\neg(b_0\lor b_1)$ \\
		\multicolumn{3}{c}{Distributivity size 1} \\:
		$\neg(b_0\lor b_1)\lor\neg(\neg b_0\lor b_1)$ & $\rewritesto$ & $\neg b_1$ \\
		\multicolumn{3}{c}{Distributivity size 2} \\:
		$\neg(b_0\lor b_1\lor b_2)\lor b_0$ & $\rewritesto$ & $b_0\lor\neg(b_1\lor b_2)$ \\
		\multicolumn{3}{c}{Absorption size 2 (and it's dual)} \\:
		$\neg(b_0\lor b_1\lor b_2)\lor\neg b_0$ & $\rewritesto$ & $ \neg b_0$ \\
		$\neg(\neg b_0\lor b_1\lor b_2)\lor b_0$ & $\rewritesto$ & $b_0$ \\
		\hline
	\end{tabular}
	\label{L3:B}
\end{table}
\begin{theorem}
	$(\Sig, \B_3,\Delta_3)$ is complete with respect to $\lang_3$.
\end{theorem}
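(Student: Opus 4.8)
The plan is to follow the template established for $\lang_0$, $\lang_1$ and $\lang_2$ and reduce the statement to Lemma \ref{lem:nf}. First I would fix a table of candidate normal forms for $\EqTh_3$: those forms of $\lang_2$ (Table \ref{L2:nf}) that remain irreducible under $\Delta_3$, together with the genuinely size-$3$ shapes, which I expect to be, up to $\B_3$, $\pm(b_1\lor b_2\lor b_3\lor b_4)$, $\pm(\neg(b_1\lor b_2)\lor b_3\lor b_4)$, $\pm(\neg(b_1\lor b_2\lor b_3)\lor b_4)$, $\pm(\neg(b_1\lor b_2)\lor\neg(b_3\lor b_4))$ and $\pm(\neg(\neg(b_1\lor b_2)\lor b_3)\lor b_4)$. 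With this table fixed, the two hypotheses of Lemma \ref{lem:nf} must be discharged. The first, that every equation of $\B_3$ and every rule of $\Delta_3$ preserves logical equivalence, is routine: as in the earlier proofs I would verify it with truth tables and leave it to the reader. The second, that the listed normal forms are pairwise logically inequivalent, is again a finite truth-table check, but now with more forms and several different variable counts, so I would organise it by the number of occurring bits to keep it tractable.

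The substance is showing that every circuit of $\lang_3$ rewrites to one of these normal forms. Reusing the reductions from the earlier cases, I would first eliminate $\top$ and $\neg\top$ and strip leading negations, so that it suffices to treat a top-level disjunction $c_0\lor c_1$ with $\rank{c_0}+\rank{c_1}=2$. Because $\B_3$ contains commutativity and associativity, I may assume $\rank{c_0}\geq\rank{c_1}$, which leaves the two principal cases $(\rank{c_0},\rank{c_1})=(2,0)$ and $(\rank{c_0},\rank{c_1})=(1,1)$. In the first case $c_0$ is a size-$2$ normal form (essentially $2.1$ or $2.2$, possibly negated) and $c_1=\pm b$; in the second, both factors have the shape $\pm(b\lor b')$. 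I would then walk through each combination, using idempotence, complementation and absorption when a bit meets its partner, DistrCompl and the size-$2$ distributivity and absorption rules when a bit meets a flat negated disjunction, and Idempotence size $1$, Distributivity size $1$ and the two Distributivity rules of Table \ref{L3:B} for the doubly-negated combinations. Each combination should either collapse to a smaller normal form or be exactly one of the new size-$3$ forms.

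The main obstacle is the distributivity rules, which are deliberately asymmetric in size: the rule $\neg[b_0\lor\neg(b_1\lor b_2)]\rewritesto\neg(b_0\lor\neg b_1)\lor\neg(b_0\lor\neg b_2)$ turns a size-$2$ term into a size-$3$ term, while its dual decreases size. Two consequences complicate the case analysis. First, a term such as $\neg(\neg(b_1\lor b_2)\lor b_3)$ --- which is a normal form for $\lang_2$ --- is no longer normal in $\lang_3$, so I cannot quote the $\lang_2$ classification unchanged and must recheck which inherited forms survive. Second, the size-increasing rule means that $\rightarrow_{\B_3,\Delta_3}$ is only well-behaved on $\lang_3$ because the size bound (enforced in the implementation by the sort discipline) blocks any rewrite whose result would exceed size $3$; it is precisely this blocking that makes a shape like $\neg(\neg(b_1\lor b_2)\lor b_3)\lor b_4$ irreducible and hence a genuine new normal form. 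The inductive assumption that $c_0$ and $c_1$ are already in normal form therefore has to be read relative to this bound rather than relative to the unrestricted $\lang_2$-reduction.

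Finally, the remaining difficulty is bookkeeping. I would have to confirm that the case split over pairs of normal-form subcircuits is genuinely exhaustive modulo associativity and commutativity, that every listed size-$3$ form is irreducible once the size bound is taken into account, and that no two listed forms are logically equivalent. The matching of the distributivity and absorption rules modulo $\B_3$ is the error-prone part, since a single circuit may expose several redexes depending on how its top-level disjunction is associated; the termination and coherence facts that would make these choices immaterial are established separately via Maude-NPA in Section \ref{ssec:ThInMaude}, so for this statement I would take them as given and concentrate on reachability of the normal forms.
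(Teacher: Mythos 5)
Your plan follows the paper's proof in outline: the same reduction to Lemma~\ref{lem:nf}, the same case split on $c_0\lor c_1$ with $(\rank{c_0},\rank{c_1})$ equal to $(2,0)$ or $(1,1)$, the same dispatching of the positive and disjoint-variable cases, and your five size-$3$ shapes correspond (modulo presentation) to forms 3.1--3.6 of Table~\ref{L3:nf}. The genuine gap is your claim that the size bound is ``enforced in the implementation by the sort discipline'' and that this blocking is what makes $\neg(\neg(b_1\lor b_2)\lor b_3)\lor b_4$ irreducible. No such mechanism exists: the signature only records $\bit\leq\circuit$ and does not count disjunctions, so the Distributivity rule $\neg[b_0\lor\neg(b_1\lor b_2)]\rewritesto\neg(b_0\lor\neg b_1)\lor\neg(b_0\lor\neg b_2)$ (all of whose variables have sort $\bit$) matches the subterm $\neg(\neg(b_1\lor b_2)\lor b_3)$ modulo commutativity in \emph{any} context and rewrites the whole term to one of size $4$. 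Worse, the blocking assumption contradicts reductions your own case analysis needs: to reduce $\neg(\neg(b_0\lor b_1)\lor b_2)\lor b_0$, the paper's Appendix~\ref{app:proofL3} first applies Distributivity, passing through the size-$4$ intermediate $\neg(\neg b_0\lor b_2)\lor\neg(\neg b_1\lor b_2)\lor b_0$, and then applies Absorption (dual). If size-increasing steps were blocked, that circuit would be irreducible, hence a normal form that is logically equivalent to, but not $\B_3$-equal to, an instance of normal form 2.2 ($b_0\lor\neg(\neg b_1\lor b_2)$) --- violating condition 2 of Lemma~\ref{lem:nf}, so under your reading completeness would actually fail. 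The consistent reading is that rewriting is unrestricted; the price, which your proof must then pay, is that $\neg(\neg(b_0\lor b_1)\lor b_2)\lor b_3$ is \emph{not} a normal form either (it reduces to the size-$4$ normal form $\neg(b_2\lor\neg b_0)\lor\neg(b_2\lor\neg b_1)\lor b_3$, a point the paper's own table glosses over, though the paper never invokes a blocking mechanism), and such out-of-language normal forms must be included in the pairwise-inequivalence check.

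A second, smaller problem is your normal-form table itself. Your size-$3$ shapes are written with apparently distinct bits, so the irreducible shared-variable circuits --- the paper's 3.1 ($\neg(b_0\lor b_1)\lor\neg(b_0\lor b_2)$), 3.3 and 3.4 --- are not covered if the bits are distinct; if instead you allow bits to be identified or negated, your shapes also capture reducible instances such as $\neg(b_0\lor b_1)\lor\neg(b_0\lor b_1)$ (Idempotence size 1) and $\neg(b_0\lor b_1)\lor\neg(\neg b_0\lor b_1)$ (Distributivity size 1), which are not normal forms at all. Since condition 2 of Lemma~\ref{lem:nf} quantifies over the concrete normal forms, determining exactly which sharing patterns are irreducible is part of the proof, not bookkeeping to be deferred; the paper resolves this by listing those patterns explicitly in Table~\ref{L3:nf} and walking through them one by one in Appendix~\ref{app:proofL3}.
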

\begin{proof}
	The full proof is given in Appendix \ref{app:proofL3}. Here we sketch the most important ideas. The normal forms for $\lang_3$ are given in Table \ref{L3:nf}.
	The strategy is similar to the previous proofs. A circuit of size 3 has the form $c_0\lor c_1$ with either:
	\begin{itemize}
		\item $\rank{c_0}=2$ and $\rank{c_1}=0$, or
		\item $\rank{c_0}=\rank{c_1}=1$.
	\end{itemize}
	However, note that whenever $\vars(c_0)\cap\vars(c_1)=\emptyset$, we obtain normal form 3.1 or 3.5 modulo associativity and commutativity. Also, if $c_0$ is positive, we use associativity and the rules in $\Delta_2$ to reduce the size of the circuit. This leaves us with the case where $c_0$ has an outer negation and $c_1$ has variables in common with $c_0$. This can be handled by a case distinction on the circuits. For the rest, we use a similar case distintion as before. See the appendix \ref{app:proofL3} for the full proof.
	\begin{table}[ht]
		\caption{The normal forms of $\EqTh_3$}
		\centering
		\begin{tabular}{l l}
			\hline
			0.1 & $\pm\top$ \\
			0.2 & $\pm b_0$ \\
			1.1 & $\pm(b_0\lor b_1)$ \\
			2.1 & $\pm(b_0\lor b_1\lor b_2)$ \\
			2.2 & $\neg(b_0\lor b_1)\lor b_2$ \\
			3.1 & $\neg(b_0\lor b_1)\lor\neg(b_0\lor b_2)$ \\
			3.2 & $\pm[\pm(b_0\lor b_1)\lor\pm(b_2\lor b_3)]$ \\
			3.3 & $\pm[\neg(b_0\lor b_1)\lor\neg(\neg b_0\lor\neg b_1)]$ \\
			3.4 & $\pm[\neg(b_0\lor b_1)\lor\neg(\neg b_0\lor b_2)]$ \\
			3.5 & $\pm[\pm(b_0\lor b_1\lor b_2)\lor\pm b_3]$ \\
			3.6 & $\pm[\neg(\neg(b_0\lor b_1)\lor b_2)\lor b_3]$ \\
			\hline
		\end{tabular}
		\label{L3:nf}
	\end{table}
\end{proof}
The normal form $\neg(\neg(b_0\lor b_1)\lor b_2)$ from $\lang_2$ can now be rewritten using distributivity, and hence, must be removed from the list of normal forms.

\subsubsection{Further languages}
For decompositions of higher languages $\lang_k$ with $k>3$, we leave $\B_k=\B_2$ and can construct $\Delta_k$ in a similar way as before.
Here, we give a non-tight bound on the size of these systems, as this gives an indication of the computational cost of running the protocol analyses.

The rules for double negation, annihilation and identity as discussed in the previous subsections of Section \ref{sec:findeq} are applicable to any circuit, and therefore result in $3 = O(1)$ equations.

The rules for idempotence and complementation are only applicable for circuits with odd size.
This results in $O(1)$ equations for circuits of size $i$ with $i$ odd, so $O(k)$ equations in total for a language $\mathcal{L}_k$.
Absorption equations, derived from the equations $\phi \wedge (\phi \vee \psi)$ or $\phi \vee (\phi \wedge \psi)$, where $\phi$ and $\psi$ are disjunctive terms, are applicable to circuits with size at least 2.
For a circuit of size $k$, there are $\lfloor (k-2)/2 \rfloor$ ways the size $|\psi|$ can be chosen as $k-2-|\psi|$ should be even.
Thus, we have $O(i)$ absorption equations for circuits of size $i$, resulting in a total of $O(k^2)$ equations for a language $\mathcal{L}_k$.
Distributivity equations are derived from the different distributivity axioms for Boolean algebra.
The number of ways to choose sizes for the elements in these axioms is $O(4^k)$, therefore we have an exponential bound on the number of rules needed to model all circuits with at most size $k$.

\subsection{Constructing Equational Theories}
\label{ssec:ThInMaude}
In this section, we show how the languages $\mathcal{L}_0$ through $\mathcal{L}_3$ can be converted to equational theories in Maude-NPA \cite{maude-npa} using the rules in $\Delta_0$ through $\Delta_3$ as defined in Tables \ref{L0:B}, \ref{L1:B}, \ref{L2:B} en \ref{L3:B}. The code is listed in appendix \ref{app:maudecode}. In our model, all four equational theories have the same signature $\Sigma$ and the same set of equations $B$, which can be found in Listing \ref{signature_modulo_theory} as Maude code. Three sorts are defined to model the equations, namely \texttt{BitSort}, \texttt{Circuit} and \texttt{Msg}. The latter is required by Maude-NPA as a supersort of all other sorts for protocol verification. The sort \texttt{BitSort} represents bits. When variables of sort \texttt{BitSort} are combined through operators, they always result in the sort \texttt{Circuit}. This ensures that they are $B_2$ terminating, where $B_2$ contains associativity and commutativity. 

In Maude-NPA, the set of equations $B$ is incorporated in the declaration of operators (indicated using \texttt{assoc} and \texttt{comm} respectively), so the associativity and commutativity properties of the \texttt{or}-operator that are part of $B$ are defined in Listing \ref{signature_modulo_theory} as well.

The representations of $\Delta_0$ through $\Delta_3$ in Maude can be found in Listings \ref{l0_maude}, \ref{l1_maude}, \ref{l2_maude} and \ref{l3_maude}. Note that we omitted the rules $\Delta_{i-1}$ for $\Delta_i$ for $i > 0$ for readability, so the complete set of rewrite rules for $\Delta_i$ is the combined rules of listings representing $\Delta_0$ to $\Delta_i$.

For each of the decompositions $(\Sigma, \B_0, \Delta_0)$ for $\EqTh_0$ through $(\Sigma, \B_3, \Delta_3)$ for $\EqTh_3$, the FVP has been confirmed using Maude tooling. We refer the reader back to Section \ref{eq_theory_background} for the defintions of regularity, confluence, coherence and sort-decreasingness. A set of equations $B$ that models associativity and commutativity meets all the criteria for the FVP, since a finitary unification algorithm is known \cite{unification-com-and-assoc} and it is trivial to see that the equations of associativity and commutativity are regular. Using the Church-Rosser checker of the Maude Formal Environment \cite{MFE}, we established the $B$-confluence of $\Delta_0$ through $\Delta_3$ and confirmed that they are sort-decreasing. Using the coherence checker  of the Maude Formal Environment, we determined that they are $B$-coherent. Using the AProVE tool \cite{AProVE} as an external dependency within the termination checker of the Maude Formal Environment, we determined that they are terminating as well. The variant complexity of $\EqTh_0$ through $\EqTh_3$ can be found in Table \ref{tab:verification_time} of Section \ref{sec:results}. Using the definition in Section \ref{definition:fvp}, we can conclude that the equational theories $\EqTh_0$ through $\EqTh_3$ satisfy the Finite Variant Property.

\section{Formal Verification of a Small Circuit} 
\label{sec:verifying-small-circuit}

In this section, we show the results of running experiments using the equational theories for bounded circuits $\EqTh_1$ through $\EqTh_3$. As $\EqTh_0$ does not contain any applications of \texttt{or}, we cannot define a protocol where the equational theory is shown to provide useful insights. We therefore start with $\EqTh_1$, which supports one disjunction. In the protocol we used for the experiments, we simply let a party $A$ load/generate one bit $b$ and send $\neg(b \vee b)$ over the communication line. The property that we verify using Maude-NPA, is that the adversary is able to learn the value of $b$. Since Boolean logic dictates that $\neg(b \vee b)=\neg b$, $\EqTh_1$ and beyond model the equivalence relations for Boolean logic for (at least) one application of the \texttt{or}-operator, and the adversary can apply a \texttt{not}-operator to retrieve $b$, we expect the tool to conclude that running the protocol once results in the adversary learning $b$. This experiment setup is illustrated in Figure \ref{fig:protocol}. The Maude-NPA code is shown in Listing \ref{fig:protocol_maude}. Note that we had to add an operator \texttt{AsBit: Fresh -> BitSort} to model randomly generated bits or data provided by a protocol party that needs to remain confidential in Maude-NPA. The \texttt{Fresh} sort is a special sort in Maude-NPA that represents randomness and cannot have any subsorts or supersorts.
\begin{figure*}[t]
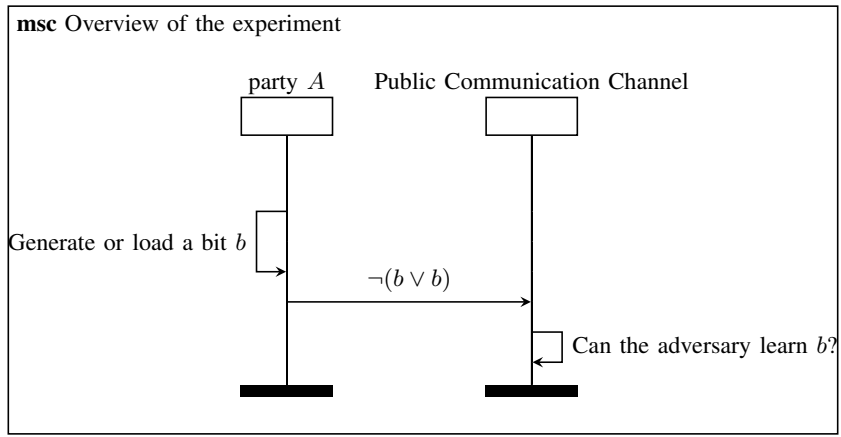

	\centering
	\begin{msc}[small values, environment distance=1cm, instance distance=2cm, first level height=1cm]{Overview of the experiment}
			
			\declinst{client}{party $A$}{}
			\declinst{server}{Public Communication Channel}{}
			
			\mess[side=left]{Generate or load a bit $b$}{client}{client}
			\nextlevel
			\nextlevel
			\nextlevel
			\mess[side=right]{$\neg(b \vee b)$}{client}{server}
			\nextlevel
			\mess[side=right]{Can the adversary learn $b$?   }{server}{server}[1]
		\end{msc}
	\caption{Overview of the protocol that was implemented in Maude-NPA for performance benchmarking.}
	\label{fig:protocol}
\end{figure*}
\begin{lstlisting}[caption={Description of the experiment protocol in Maude-NPA.}, label={fig:protocol_maude}]
vars r1 : Fresh .
	
eq STRANDS-PROTOCOL
= :: r1 ::
[ nil | +(not(or(asBit(r1), asBit(r1)))), nil ]
[nonexec] 
\end{lstlisting}
 In order to accurately model an adversary in Maude-NPA, we need to state rules for the abilities that an adversary has. The adversary should be able to apply \texttt{not}-operators and \texttt{or}-operators to any data, in this model represented by circuits, that are sent over a public communication channel. Lastly, the adversary should be able to generate the value $\top$, which represents the \texttt{True} value in Boolean logic. This allows the adversary to apply arbitrary computations to any data that they obtain. Listing \ref{adversary_rules} shows how these rules are encoded into Maude-NPA.

\begin{lstlisting}[caption={Adversary capabilities as modelled in Maude-NPA.}, label={adversary_rules}]
	vars C0 C1 : Circuit .
	 
	eq STRANDS-DOLEVYAO
	= :: nil ::[ nil | -(C0), -(C1), +(or(C0, C1)), nil ] &
	:: nil ::[ nil | -(C0), +(not(C0)), nil ] &
	:: nil ::[ nil | +(top), nil]
	[nonexec] .
\end{lstlisting}

We want to identify how the performance of the verification scales with increasing sizes for the equational theory. Therefore we analyse this simple protocol for the different equational theories $\EqTh_1$, $\EqTh_2$, and $\EqTh_3$.

\section{Results}
\label{sec:results}

In this section, we list the results of our experiments. We verified that Maude-NPA produces the expected output and list the time it takes for Maude-NPA to produce the expected output for the protocol in Figure \ref{fig:protocol} using the decompositions of the equational theories $\EqTh_1 = (\Sigma, \B_2, \Delta_1)$, $\EqTh_2=(\Sigma, \B_2, \Delta_2)$ and $\EqTh_3=(\Sigma, \B_2, \Delta_3)$ using a laptop with an Intel Core i7-8665U CPU and 16GB RAM. The verification times are the mean for $50$ separate runs of the respective scripts. Additionally, we show the variant complexity of $\EqTh_0$ through $\EqTh_3$. These results can be found in Table \ref{tab:verification_time}. 
\begin{table}[htbp]
	\caption{Time to verify the protocol in Figure \ref{fig:protocol} and variant complexity for $E_0$ through $E_3$ .}
	\label{tab:verification_time}
	\begin{center}
	\begin{tabular}{|r|r|r|}
		\hline
		Equational Theory	&	Verification Time	& Variant Complexity \\ \hline
		$\EqTh_0$			& -						& $4$ \\ \hline
		$\EqTh_1$           & $1.233s$ 				& $11$ \\ \hline
		$\EqTh_2$           & $1.823s$  			& $19$\\ \hline
		$\EqTh_3$           & $21.348$  			& $30$ \\ \hline
	\end{tabular}
	\end{center}
\end{table}
\section{Conclusion}
\label{sec:conclusion}
In this paper, we have presented the first class of equational theories for arbitrary computations up to a certain depth. We prove that this class of equational theories attains the FVP, which makes it possible to use these equational theories in combination with automatic protocol verification in the symbolic model to reason about adversaries that perform arbitrary computations on data sent over a communication channel. We have proven that this class of equational theories up to size $3$ is equivalent to Boolean logic for circuits up to size $3$, ensuring that this class of equational theories correctly models Boolean logic and therefore arbitrary computations. We additionally provide a benchmark for the verification time to verify a simple logical statement within Maude-NPA with these equational theories and show the variant complexity of these equational theories. 

We conjecture that equational theories $\EqTh_0$, $\EqTh_1$, $\EqTh_2$ and $\EqTh_3$ can be extended to support Boolean circuits of arbitrary size. That would technically mean that any algorithm, including cryptographic primites like encryption algorithms and digital signature algorithms, can be modelled in automated protocol verification by rewriting it as a binary circuit. However, the circuit size of such cryptographic algorithms is in the order of millions, which would make it less feasible considering that the verification time scales poorly with the supported circuit size in our approach. 

It is additionally possible to extend $\EqTh_0$ through $E_3$ to equational theories that model fully-homomorphic encryption and multi-party computation techniques like secret sharing, while still maintaining the FVP. The scalability of our approach would then be less of a problem, since the equational theory would only need to support the circuit size of the function to be homomorphically evaluated, instead of the circuit size of the (cryptographic) algorithms. Additionally, the size needs to be extended to allow the tool to find attacks, but that is inherent to this approach. The function to be evaluated could be orders of magnitude smaller than cryptographic algorithms in terms of circuit size, which could make our approach suitable for these use cases. 

Moreover, in this work, we focused on equational theories with the FVP, such that automatic protocol verification tools can use a generic unification algorithm, specifically variant unification \cite{variant-unification}. In previous work, a protocol-specific unification algorithm for homomorphic encryption was used to drastically improve the complexity of verifying protocols using this equational theory \cite{unification_hom_enc} and some unification algorithms for distributivity are known \cite{boolean_unification}. In the future, it would be interesting to see if the scalability in terms of circuit size for performance can be improved by implementing a specific unification theory for Boolean logic that is suitable for protocol verification.

\section*{Acknowledgements}
We thank our colleagues Vincent Dunning and Maaike van Leuken for their support in running some of the tooling and for their reviews.

\bibliographystyle{plain}
\bibliography{references}
\appendices
\section{Data Availability}
\label{app:data}
The research has been conducted using various scripts written in \texttt{Maude}. These scripts and instructions on how to use them to reproduce our results can be found at \url{https://zenodo.org/records/16419234}.
\section{Proof of completeness $\B_3$}
\label{app:proofL3}
Here we give a detailed proof for the completeness of $\B_3$ with respect to $\lang_3$. 

Using the same strategy as before, we must show that each formula is logically equivalent to a normal form. As mentioned in the text, we do this by a case distinction. We can ignore the cases where both disjuncts have no variables in common (since this will always result in a new normal form).

Consider any circuit $c_0\lor c_1$ of rank 3. Then either $\rank{c_0}=1$ and $\rank{c_1}=1$ or we have $\rank{c_0}=2$ and $\rank{c_1}=0$.

Let us start with the case where $\rank{c_0}=\rank{c_1}=1$. In the case where one of the disjuncts is positive (has no outer negation), we can use the rules from $\Delta_2$ to reduce the rank of the formula. Therefore, we only need to cover the cases where both $c_0$ and $c_1$ are of the form $\neg(b_i\lor b_j)$ for some indices $i,j$. Let $c_0=\neg(b_0\lor b_1)$. We now perform a case distinction on $c_1$.
	\begin{itemize}
		\item Suppose $c_1=\neg(b_0\lor b_1)$. Then we use idempotence of rank 1 to reduce the circuit to normal form 1.1.
		\item Suppose $c_1=\neg(\neg b_0\lor b_1)$. Then we can reduce the circuit to normal form 0.2 by Distributivity 1. The case where $c_1=\neg(b_0\lor\neg b_1)$ is equivalent to this case by commutativity  and symmetry. Also the dual version of Distributivity 1 is equivalent due to symmetry.
		\item Suppose $c_1=\neg(\neg b_0\lor\neg b_1)$. Then we obtain normal form 3.3. 
		\item Suppose $c_1=\neg(b_0\lor\pm b_2)$. Then we obtain to normal form 3.1. The same holds for $c_1=\neg(b_1\lor\pm b_2)$.
		\item Suppose $c_1=\neg(\neg b_0\lor\pm b_2)$. Then we obtain normal form 3.4. The same holds for $c_1=\neg(\neg b_1\lor\pm b_2)$.
	\end{itemize}

Let us now consider the case where $\rank{c_0}=2$ and $\rank{c_1}=0$. Again, we only consider cases where $c_0$ and $c_1$ are normal form. Therefore, formula $c_1$ is always of the form $\pm b_i$ for some $i$ (we ignore the case where $c_1=\pm\top$ since we include $\B_1$ in the set of equations and can always eliminate $\pm\top$). 
\begin{itemize}
	\item Suppose $c_0=\neg(b_0\lor b_1\lor b_2)$. If $c_1=b_i$ with $i\in\{0,1,2\}$, then we use Distributivity 2 to reduce it to normal form 2.2. If $c_1=\neg b_i$, with $i\in\{0,1,2\}$, then we use Absorption 2 to reduce the formula to normal form 0.2.
	\item Suppose $c_0=\neg(\neg(b_0\lor b_1)\lor b_2)$. We do a case distinction on $c_1$:
	\begin{itemize}
		\item If $c_1$ is $\pm b_0$ or $\pm b_1$, reduce it to a normal form using the following derivations. First we show the derivation for $b_0$ (the case for $b_1$ is identical):
		\begin{align*}
			&\neg(\neg(b_0\lor b_1)\lor b_2)\lor b_0 \\
			&\rewritesMod \neg(\neg b_0\lor b_2)\lor\neg(\neg b_1\lor b_2) \lor b_0 \tag{Distr} \\
			&\rewritesMod b_0\lor\neg(\neg b_1\lor b_2) \tag{Absorption (dual)}
		\end{align*}
		Next, we show the derivation for $\neg b_0$ (case $\neg b_1$ is identical):
		\begin{align*}
			&\neg(\neg(b_0\lor b_1)\lor b_2)\lor\neg b_0 \\
			&\rewritesMod \neg(\neg b_0\lor b_2)\lor\neg(\neg b_1\lor b_2)\lor\neg b_0 \tag{Distributivity} \\
			&\rewritesMod \neg b_0\lor\neg b_2\lor\neg(\neg b_1\lor b_2) \tag{DistrCompl} \\
			&\rewritesMod \neg b_0\lor\neg b_2 \tag{Absorption}
		\end{align*}
		\item If $c_1=b_2$, then we use the following derivation to reduce it to a normal form.
		\begin{align*}
			&\neg(\neg(b_0\lor b_1)\lor b_2)\lor b_2 \\
			&\rewritesMod \neg(\neg b_0\lor b_2)\lor\neg(\neg b_1\lor b_2) \lor b_2 \tag{Distributivity} \\
			&\rewritesMod \neg(\neg b_0\lor b_2)\lor b_1\lor b_2 \tag{DistrCompl} \\
			&\rewritesMod b_0\lor b_1\lor b_2 \tag{DistrCompl}
		\end{align*}
		A similar derivation holds for $\neg b_2$ using Absorption 2:
		\begin{align*}
			&\neg(\neg(b_0\lor b_1)\lor b_2)\lor\neg b_2 \\
			&\rewritesMod \neg(\neg b_0\lor b_2)\lor\neg(\neg b_1\lor b_2) \lor\neg b_2 \tag{Distributivity} \\
			&\rewritesMod \neg(\neg b_0\lor b_2)\lor\neg b_2 \tag{Absorption} \\
			&\rewritesMod \neg b_2 \tag{Absorption}
		\end{align*}
	\end{itemize} 
\end{itemize}
Hence, we conclude that the all circuits in $\lang_3$ can be rewritten using $\RwRules_3$ to normal forms such that not two normal forms are logically equivalent to each other. Therefore the equational theory $(\Sig,\B,\RwRules_3)$ is complete with respect to $\lang_3$.

\section{Maude code}
\label{app:maudecode}

\begin{lstlisting}[caption={Signature $\Sigma$ and equations $B$ in Maude.}, label={signature_modulo_theory}]
	sorts BitSort Circuit Msg .
	subsort BitSort < Circuit .
	subsort Circuit < Msg .
	op not : BitSort -> BitSort .
	op not : Circuit -> Circuit .
	op or : Circuit Circuit -> Circuit [assoc comm] .
	op top : -> BitSort .
\end{lstlisting}

\begin{lstlisting}[caption={Rewrite rules $\Delta_0$.}, label={l0_maude}]
	vars B0 B1 B2 B3 : BitSort .
	vars C0 C1 C2 : Circuit .

	*** Double negation
	eq not(not(c_0)) = c_0  .

\end{lstlisting}

\begin{lstlisting}[caption={Rewrite rules for $\Delta_1$ without the rules for $\Delta_0$.}, label={l1_maude}]
	vars B0 B1 B2 B3 : BitSort .
	vars C0 C1 C2 : Circuit .

	*** Annihilation
	eq or(top, C0) = top  .
	*** Identity
	eq or(not(top), C0) = C0  .
	*** Idempotence size 0
	eq or(B0, B0) = B0  .
	*** Complementation
	eq or(B0, not(B0)) = top  .

\end{lstlisting}

\begin{lstlisting}[caption={Rewrite rules $\Delta_2$, without the rules for $\Delta_1$.}, label={l2_maude}]
	vars B0 B1 B2 B3 : BitSort .
	vars C0 C1 C2 : Circuit .

	*** DistrCompl
	eq or(not(or(B1, B0)), B0) = or(B0, not(B1))  .
	*** Absorption
	eq or(not(or(B1, B0)), not(B0)) = not(B0)  .
	*** Absorption dual
	eq or(not(or(not(B0), B1)), B0) = B0  .
\end{lstlisting}

\begin{lstlisting}[caption={Rewrite rules $\Delta_3$ without the rules for $\Delta_2$.}, label={l3_maude}]
	vars B0 B1 B2 B3 : BitSort .
	vars C0 C1 C2 : Circuit .

	*** Distributivity
	eq not(or(B0, not(or(B1, B2)))) = or(not(or(B0, not(B1))), not(or(B0, not(B2))))  .
	*** Distributivity dual
	eq not(or(not(or(B0, B1)), not(or(B0, B2)))) = or(B0, not(or(not(B1), not(B2))))  .
	*** Idempotence size 1
	eq or(not(or(B0, B1)), not(or(B0, B1))) = not(or(B0, B1))  .
	*** Distributivity size 1
	eq or(not(or(B0, B1)), not(or(not(B0), B1))) = not(B1)  .
	*** Distributivity size 2
	eq or(not(or(B0, B1, B2)), B0) = or(B0, not(or(B1, B2)))  .
	*** Absorption size 2
	eq or(not(or(B0, B1, B2)), not(B0)) = not(B0)  .
	*** Absorption size 2 dual
	eq or(not(or(not(B0), B1, B2)), B0) = B0  .
\end{lstlisting}

\end{document}